\newtheorem{theorem}{Theorem}
\newtheorem{remark}{Remark} 
\newtheorem{assum}{Assumption}
\newtheorem{corollary}{Corollary}
\newtheorem{propos}{Proposition}
\begin{document}
\title{Robust Lattice-based Motion Planning}
\author{Abhishek Dhar, Carl Hynén, 
Johan Löfberg and Daniel Axehill
\thanks{This work is supported by ELLIT.}
\thanks{The authors are with the Division of Automatic Control, Department of Electrical Engineering, Linköping University, Sweden. e-mail:
       ({\tt\small abhishek.dharr@gmail.com,carl.hynen@liu.se, johan.lofberg@liu.se,daniel.axehill@liu.se})}}

\maketitle

\begin{abstract}
This paper proposes a robust lattice-based motion-planning algorithm for nonlinear systems affected by a bounded disturbance. The proposed motion planner utilizes the nominal disturbance-free system model to generate motion primitives, which are associated with fixed-size tubes. These tubes are characterized through designing a feedback controller, that guarantees boundedness of the errors occurring due to mismatch between the disturbed nonlinear system and the nominal system. The motion planner then sequentially implements the tube-based motion primitives while solving an online graph-search problem. The objective of the graph-search problem is to connect the initial state to the  final state, through sampled states in a suitably discretized state space, such that the tubes do not pass through any unsafe states (representing obstacles) appearing during runtime. The proposed strategy is implemented on an Euler-Lagrange based ship model which is affected by significant wind disturbance. It is shown that the uncertain system trajectories always stay within a suitably constructed tube around the nominal trajectory and terminate within a region around the final state, whose size is dictated by the size of the tube.    
\end{abstract}
\vspace{-0.in}
\section{Introduction}
\IEEEPARstart{T}{he} problem of motion-planning deals with generating a feasible trajectory, which connects an initial state of a system to a desired final state, while avoiding unsafe states, which represent physical constraints of the system as well as obstacles in the environment. Several motion-planning strategies are proposed in literature, which can be broadly classified into combinatorial strategies \cite{krogh,rimon,lingelbach} and sampling-based strategies \cite{karaman,lavalle}. The combinatorial strategies such as the potential field approach \cite{krogh}, the navigation function based approach \cite{rimon}, cell decomposition methods \cite{lingelbach}, etc. aim at utilizing the knowledge of the environment and to create a feasible path from an initial state to a final state either by utilizing a continuous model of the feasible environment or by systematically combining cells, obtained by decomposing the environment. On the other hand, the sampling-based strategies, most of which are based on the standard Rapidly exploring Random Tree (RRT) and RRT* algorithms \cite{karaman,lavalle}, aim at random explorations in the concerned state space and finding sample states, which can be feasibly connected. However, these algorithms are not inherently robust to the uncertainties which appear in practise due to various factors, such as incomplete knowledge of the model, unknown obstacles, exogenous disturbances, etc. It is essential for a motion-planning algorithm to generate robust trajectories which are feasible even under the effect of uncertainties, to ensure safety of the system as well as of the environment.\par
To guarantee safe motion in cluttered environment in presence of uncertainties, various efficient robust motion-planning algorithms have been proposed in literature. The strategies proposed in \cite{luders,lindemann,hakobyan,blackmore,bahreinian,danielson} deal with the motion-planning problem for linear systems affected by uncertainties. The results in \cite{luders,lindemann} propose robust RRT strategies, which extend the standard RRT algorithm to the cases of linear systems having additive noise and uncertainty in state estimation. The approaches in \cite{hakobyan,bahreinian} guarantee safe motion for systems with linear models, in environments with unknown obstacles. An efficient invariant-set based robust motion planner is proposed in \cite{danielson}, which handles linear systems with parametric uncertainty as well as bounded additive disturbances. Although these approaches are efficient robust motion-planning solutions, they are not suitable to handle systems with nonlinear models, which is mostly the case in practise. The approaches in \cite{tsukamoto,manjunath,majumdar,gurgen,singh} address the robust motion-planning problem for nonlinear systems with uncertainties. A learning-based robust motion planner is proposed in \cite{tsukamoto}, which utilizes contraction theory to generate a safety certificate for trajectories of a nonlinear system affected by additive disturbances. However, to guarantee performance, the strategy in \cite{tsukamoto} requires complete knowledge of the obstacles in the environment, which is restrictive for practical implementations. A control barrier function based robust RRT strategy is proposed for uncertain nonlinear systems in \cite{manjunath} and the performance is validated through numerical simulations.  The strategies in \cite{majumdar,gurgen} generate funnel libraries to handle uncertainties in robust motion-planning. These strategies can efficiently handle nonlinear systems under the effect of uncertainties, such as external disturbances and obstacles which appear during runtime. However, as the funnels vary in shape and size, additional efforts/conditions are required to maintain feasible composition of the funnels during online implementation. The drawback associated with the funnel library based motion-planning is circumvented in the strategy in \cite{singh}, which generates a fixed-size tube along a planned nominal trajectory, to capture the motion of the nonlinear system under uncertainty. However, the approach in \cite{singh} depends upon an external motion planner to compute the nominal trajectory and it requires replanning during runtime if the currently planned path becomes infeasible due to the appearance of any previously unforeseen obstacle. \par
This paper proposes a novel robust lattice-based motion-planning strategy for nonlinear systems which is affected by bounded a exogenous disturbance. The lattice based motion planner \cite{bergman,bergman1,ljungqvist} is an efficient sampling-based planning strategy, which converts a motion-planning problem to an online graph-search problem. In this strategy the state space is suitably discretized and the overall motion is generated by utilizing precomputed motion primitives, which connect the initial and the final states of the system, through the discretized states while avoiding obstacles during runtime. However, the lattice-based motion planners are not inherently robust to model uncertainties. The approach in \cite{gonzalez} proposes the use of graduated fidelity lattices to handle uncertainties occurring due to additive noise. The efficiency of the planner in \cite{gonzalez} is dependent on heuristics and the performance is validated through simulation and real-time experiments. The proposed robust lattice-based motion planner in this paper analytically guarantees collision free motion for uncertain nonlinear systems, which is further validated through a suitable simulation example. In this proposed approach, the lattice-based motion planner utilizes motion primitives (generated utilizing the nominal nonlinear model) along with a suitably designed feedback controller, that generates a fixed-size tube around each generated primitive. These tubes are guaranteed to contain the trajectory of the actual uncertain system affected by a bounded additive disturbance. During online implementation, a graph-search is done to connect the initial state to a region around the final state (the size of the region being dependent upon the size of the tube) utilizing the library of computed motion primitives, such that the tubes (around each utilized primitive) do not collide with any obstacle or violate any saturation constraint.\par
The contribution in this paper is twofold: firstly, a robust lattice-based motion planner is proposed for a general class of nonlinear systems, affected by bounded additive disturbance. A constrained optimal control problem (COCP) is proposed, which represents the motion-planning problem for the nominal nonlinear system. The constraints in the COCP are tightened utilizing a bounded set, containing the error between the nominal and the actual uncertain systems for all time, by virtue of a suitably designed feedback controller. This set constitutes the tube around the motion primitives. Secondly, the proposed strategy is implemented on a general Euler-Lagrange (EL) system subjected to constraints. A feedback controller for the error dynamics (between the nominal and the uncertain EL systems) as well as the bounded set (defining the tube), containing the error for all time, are explicitly designed. The performance of the proposed design is validated through a simulation experiment, which considers a ship model (satisfying the EL dynamics) subjected to path and operational constraints and affected by significant wind disturbances. It is guaranteed that the uncertain nonlinear system's trajectory stays within the tube around the planned nominal system's trajectory for all time and terminates within a region (dependent upon the size of the tube) around the final state. \par
\vspace{0.1in}
\textbf{Notations:} Given two sets \begin{small}$\mathcal{P},\mathcal{Q}$\end{small} then \begin{small}$\mathcal{P}\oplus\mathcal{Q}\triangleq \big \{a+b:a\in \mathcal{P}, b\in \mathcal{Q}\big \}$, $\mathcal{P}\ominus \mathcal{Q}\triangleq \big \{a: a\oplus \mathcal{Q}\subseteq \mathcal{P} \big \}$\end{small}. $\mathbb{R}$ and $\mathbb{N}$ are set of all real and integer numbers, respectively and $\mathbb{N}_{[m:n]}\triangleq \{m,m+1,\cdots,n-1,n\}, \forall m,n\in \mathbb{N}$ and $m<n$. The notations $\|\cdot\|$ and $\|\cdot\|_{\infty}$ denotes the $2$-norm and the infinity norm of the argument vectors/matrices, respectively. Given a vector $x\in \mathbb{R}^n$ and matrix $Q\in \mathbb{R}^{n\times n}$, the matrix norm $\|x\|_Q$ is defined as $\|x\|_Q\triangleq \|Q^{\frac{1}{2}}x\|$. The notation $diag(n_1,n_2,\cdots,n_i)$, for some finite value of $i$ and $\forall (n_j,j)\in \mathbb{R} \times \mathbb{N}_{[1:i]}$, denotes a diagonal matrix in $\mathbb{R}^{i\times i}$, with $n_j\in \mathbb{R}, \forall j \in \mathbb{N}_{[1:i]}$ as the diagonal entries of the matrix.
\section{Problem Formulation}
The paper focuses on the motion-planning problem for uncertain nonlinear systems of the form 
\begin{equation}\label{eq:nonlinear uncertain dynamics}
\dot{x}(t)=f(x(t),u(t))+d(t), \quad x(0)=x_{ini}
\end{equation}
\noindent where $x(t)$, $u(t)$ are the system state and the control input vectors, respectively. The state and control inputs are subjected to the following safety constraints
\begin{equation}
x(t)\in \mathcal{X}\backslash \mathcal{O}\triangleq \overline{\mathcal{X}}\subset \mathbb{R}^n;\quad u(t)\in \mathcal{U}\subset \mathbb{R}^m \nonumber
\end{equation}
\noindent 
where the region $\mathcal{O}\subset \mathcal{X}\subseteq \mathbb{R}^n$ represents the obstacle region in the state space, which must be avoided by the system \eqref{eq:nonlinear uncertain dynamics}. The model uncertainty is represented through $d(t)$, which accounts for uncertain exogenous disturbances affecting the system. 
\begin{assum}\label{as:disturbance}
The disturbance $d(t)$ satisfies $d(t)\in \mathcal{D}\subset \mathbb{R}^n, \forall t \in \mathbb{R}$, where $\mathcal{D}$ is a known compact set. 
\end{assum}
\noindent The concerned motion-planning problem can be formulated as a COCP as follows
\begin{subequations}\label{eq:path planning COCP}
\begin{align}
\min_{x(t),u(t),T_f} &\ \ J=\int_{0}^{T_f}l(x(t),u(t))dt \nonumber \\
& x(0)=x_{ini}; \quad x(T_f)=x_f \\
& \dot{x}(t)=f(x(t),u(t))+d(t)\label{eq:dynamic constraint} \\
& x(t)\in \overline{\mathcal{X}};\quad u(t)\in \mathcal{U}
\end{align}
\end{subequations} 
The running cost $l(x,u)$ is chosen to define the performance measure $J$. The disturbance $d(t)$ in \eqref{eq:dynamic constraint} is considered to satisfy \textit{Assumption} \ref{as:disturbance}. The COCP in \eqref{eq:path planning COCP} is designed to return a feasible trajectory for the system in \eqref{eq:nonlinear uncertain dynamics} to travel from the initial position $x_{ini}$ to a desired position $x_f$ while respecting all the imposed safety constraints. However, COCP \eqref{eq:path planning COCP} is ill-posed since the objective function neglects the presence of the disturbance term $d(t)$ and thus COCP \eqref{eq:path planning COCP} only illustrates a conceptually defined problem. This motivates the objective of this work, which is to reformulate the COCP \eqref{eq:path planning COCP} to account for the effect of model uncertainty while ensuring constraint satisfaction and to design a robust lattice-based motion planner, which will provide a feasible solution for the COCP.

\section{Robust Lattice-based Motion Planner}
This section presents a COCP that is associated with the path planning problem for the uncertain system \eqref{eq:nonlinear uncertain dynamics} and a robust lattice-based path planner to find a feasible solution for the posed COCP.
\subsection{Reachable region under the effect of model uncertainty}
The nominal disturbance-free model associated with \eqref{eq:nonlinear uncertain dynamics} is as follows
\begin{equation}\label{eq:nominal dynamics}
\dot{\overline{x}}=f(\overline{x}(t),\overline{u}(t)),\quad \overline{x}(0)=x_{ini}
\end{equation}\noindent
The error between the nominal state $\overline{x}$ and the measured state $x$ is  defined as $\tilde{x}\triangleq x-\overline{x}$. Similarly, the difference between the nominal input $\overline{u}$ and the actual implemented input $u$ is  defined as $\tilde{u}\triangleq u-\overline{u}$. Therefore, the error dynamics is computed as follows
\begin{align}\label{eq:error dynamics}
\dot{\tilde{x}}(t)=&\dot{x}-\dot{\overline{x}}=f(x(t),u(t))+d(t)-f(\overline{x}(t),\overline{u}(t))\nonumber \\
=&\tilde{f}(\tilde{x}(t),\tilde{u}(t),\overline{x}(t),\overline{u}(t))+d(t), \quad \tilde{x}(0)=0
\end{align}\noindent
where $\tilde{f}(\tilde{x}(t),\tilde{u}(t),\overline{x}(t),\overline{u}(t))\triangleq f(\overbrace{\tilde{x}+\overline{x}}^{x},\overbrace{\tilde{u}+\overline{u}}^{u})-f(\overline{x},\overline{u})$. The procedure to compute the nominal input $\overline{u}$ is depicted later in the subsection \ref{sec:RCMP}.
\begin{assum}\label{as:lyapunov conditions}
There exists a controller $\tilde{u}=\nu(t,\tilde{x})$ and a continuously differentiable function $V(t,\tilde{x})>0$ such that the following hold\\
\begin{itemize}
\item[1.] $\displaystyle{\alpha_1(\|\tilde{x}\|)\leqslant V(t,\tilde{x})\leqslant \alpha_2(\|\tilde{x}\|)}$\\
\item[2.] $\displaystyle{\frac{\partial V}{\partial \tilde{x}}\tilde{f}(\tilde{x}(t),\tilde{u}(t),\overline{x}(t),\nu(t,\tilde{x}))+\frac{\partial V}{\partial t}\leqslant -\alpha_3(\|\tilde{x}\|)+\beta(\overline{d})}$\\
\end{itemize}
where $\alpha_i(\cdot), i\in \{1,2,3\}$ and $\beta(\cdot)$ are positive definite functions of their respective arguments and $\|d\|\leqslant \overline{d}, \forall d \in \mathcal{D}$.
\end{assum}
\noindent The conditions in \textit{Assumption} \ref{as:lyapunov conditions} imply that the closed-loop error dynamics \eqref{eq:error dynamics} with respect to feedback controller $\nu(t,\tilde{x})$ is uniformly ultimately bounded (UUB). Therefore, there exists a time instant $T(\tilde{x}(0),\overline{d})>0$ and a region  $\mathcal{W}$ such that
\begin{subequations}\label{eq:all abount error rechable region} 
\begin{align}
& \hspace{0.08in} \mathcal{W}\triangleq \{\tilde{x}\in \mathbb{R}^n\ : \ \|\tilde{x}\|\leqslant \alpha_3^{-1}(\beta(\overline{d}))\}\label{eq:error reachable region}\\
& \left.\begin{matrix}
\displaystyle{\lim_{t\rightarrow T^{-}}h(\tilde{x}(t),\mathcal{W})}\rightarrow 0 \hspace{0.09in}\\
\tilde{x}(t)\in \mathcal{W}, \forall t\geqslant T \hspace{0.2in}
\end{matrix}\right\}\ \textnormal{if}\ \ \tilde{x}(0) \notin \mathcal{W}\\
& \hspace{0.089in}\tilde{x}(t)\in \mathcal{W},\forall t\in \mathbb{R}\ \hspace{0.4in} \textnormal{if}\ \ \tilde{x}(0) \in \mathcal{W} \label{eq:UUB asli wala}
\end{align}
\end{subequations}
\noindent where $h(\cdot,\cdot)$ returns the Hausdorff distance between the arguments. 
\begin{corollary}\label{cor:nominal gheshe chola}
Since $\tilde{x}=x-\overline{x}$ and $\tilde{u}=u-\overline{u}$, then for any trajectory $(\overline{x}(t),\overline{u}(t))$ of the nominal system \eqref{eq:nominal dynamics} and the controller $u(t)=\overline{u}(t)+\nu(t,\tilde{x}(t))$, the following is deduced from \eqref{eq:UUB asli wala}  
\begin{equation}
x(t)\in \overline{x}(t)\oplus \mathcal{W},\forall t\in \mathbb{R}\ \hspace{0.1in} \textnormal{if}\ \ x(0)\in \overline{x}(0)\oplus \mathcal{W} \nonumber
\end{equation}
\end{corollary}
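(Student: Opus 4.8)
The plan is to derive Corollary~\ref{cor:nominal gheshe chola} directly from the ultimate-boundedness characterization in \eqref{eq:UUB asli wala}, simply by rewriting it in terms of the original states rather than the error state. First I would observe that, by definition, $\tilde{x}(t) = x(t) - \overline{x}(t)$, so the hypothesis $x(0) \in \overline{x}(0) \oplus \mathcal{W}$ is equivalent to saying $x(0) - \overline{x}(0) \in \mathcal{W}$, i.e. $\tilde{x}(0) \in \mathcal{W}$. This is precisely the triggering condition of \eqref{eq:UUB asli wala}. Note that in the present setting the error dynamics \eqref{eq:error dynamics} were stated with $\tilde{x}(0)=0$, so this initialization is automatically satisfied whenever $0 \in \mathcal{W}$, which holds since $\mathcal{W}$ in \eqref{eq:error reachable region} is a ball centered at the origin; nonetheless the corollary is phrased for the general case where the actual initial state may be offset from the nominal one within $\mathcal{W}$, which is the relevant situation when composing motion primitives.

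Next I would apply the closed-loop controller $u(t) = \overline{u}(t) + \nu(t,\tilde{x}(t))$, so that $\tilde{u}(t) = u(t) - \overline{u}(t) = \nu(t,\tilde{x}(t))$, meaning the error dynamics \eqref{eq:error dynamics} evolve exactly as the closed-loop system to which \textit{Assumption}~\ref{as:lyapunov conditions} and hence \eqref{eq:all abount error rechable region} apply. Invoking \eqref{eq:UUB asli wala} with $\tilde{x}(0) \in \mathcal{W}$ then yields $\tilde{x}(t) \in \mathcal{W}$ for all $t \in \mathbb{R}$. Finally I would translate this back: $x(t) = \overline{x}(t) + \tilde{x}(t)$ with $\tilde{x}(t) \in \mathcal{W}$ gives $x(t) \in \overline{x}(t) \oplus \mathcal{W}$ for all $t$, by the definition of the Minkowski sum $\oplus$ from the Notations paragraph. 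This completes the argument.

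There is essentially no hard step here: the corollary is a change-of-variables restatement of the forward-invariance clause \eqref{eq:UUB asli wala} of the UUB property. The only point requiring a small amount of care is making sure the controller choice $u = \overline{u} + \nu(t,\tilde{x})$ is exactly what makes the error dynamics match the closed-loop system in \textit{Assumption}~\ref{as:lyapunov conditions}, so that \eqref{eq:UUB asli wala} is legitimately applicable; once that substitution is made explicit, the set inclusions follow immediately from the definitions of $\tilde{x}$, $\tilde{u}$, and $\oplus$. I would therefore keep the proof to a few lines, emphasizing the equivalence $x(0) \in \overline{x}(0)\oplus\mathcal{W} \iff \tilde{x}(0)\in\mathcal{W}$ and the identity $x(t) = \overline{x}(t) + \tilde{x}(t)$.
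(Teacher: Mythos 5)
Your proposal is correct and follows exactly the reasoning the paper itself relies on: the corollary is stated as a direct consequence of \eqref{eq:UUB asli wala}, obtained by noting that $x(0)\in \overline{x}(0)\oplus\mathcal{W}$ is equivalent to $\tilde{x}(0)\in\mathcal{W}$, that the controller $u=\overline{u}+\nu(t,\tilde{x})$ makes the error dynamics coincide with the closed-loop system of \textit{Assumption}~\ref{as:lyapunov conditions}, and that $x(t)=\overline{x}(t)+\tilde{x}(t)$ translates $\tilde{x}(t)\in\mathcal{W}$ into $x(t)\in\overline{x}(t)\oplus\mathcal{W}$. The paper gives no separate proof beyond this deduction, so your few-line argument matches its approach.
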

\subsection{Reformulated COCP for motion-planning}\label{sec:RCMP}
The COCP \eqref{eq:path planning COCP} is reformulated to include implementable constraints which guarantee constraint satisfaction for the trajectories associated with the uncertain system \eqref{eq:nonlinear uncertain dynamics}. The motion-planning COCP is now based on the nominal system dynamics \eqref{eq:nominal dynamics} and is formulated as follows
\begin{subequations}\label{eq:path planning COCP reformulated}
\begin{align}
\min_{\overline{x}(t), \overline{u}(t),T_f}& \ \ J=\int_{0}^{T_f}l(\overline{x}(t),\overline{u}(t))dt \nonumber \\
& \dot{\overline{x}}(t)=f(\overline{x}(t),\overline{u}(t))\label{eq:dynamic constraint nominal} \\
& \overline{x}(t)\in \overline{\mathcal{X}}_w;\ \overline{u}(t)\in \mathcal{U}_w  \label{eq:tightened constraints}\\
& \overline{x}(0)=x_{ini}\in  \overline{\mathcal{X}}_w; \quad \overline{x}(T_f)=x_f\in \overline{\mathcal{X}}_w
\end{align}
\end{subequations}
where $\overline{\mathcal{X}}_w$ and ${\mathcal{U}}_w$ are the tightened state and input constraints, respectively, defined as follows
\begin{equation}\label{eq:tightened constraints def}
\overline{\mathcal{X}}_w \triangleq \overline{\mathcal{X}}\ominus \mathcal{W}; \  \mathcal{U}_w \triangleq \mathcal{U}\ominus \nu(t,\mathcal{W})
\end{equation} 
The state and input constraints in the reformulated COCP are tightened to guarantee overall constraint satisfaction by the states and control input of the uncertain plant \eqref{eq:nonlinear uncertain dynamics}. The magnitude of constraint tightening is dependent upon the reachable region $\mathcal{W}$ of the error dynamics \eqref{eq:error dynamics}, as defined in \eqref{eq:all abount error rechable region}.
The COCP \eqref{eq:path planning COCP reformulated} generates a feasible trajectory for the nominal system \eqref{eq:nominal dynamics}, such that it travels from the initial position $x_{ini}$ to the desired final position $x_f$, while satisfying the tightened constraints \eqref{eq:tightened constraints}. Since $x(0)=\overline{x}(0)=x_{ini}$, the following is inferred from \eqref{eq:tightened constraints} using \eqref{eq:tightened constraints def} and \textit{Corollary} \ref{cor:nominal gheshe chola}   
\begin{align}
& x(t)\in \overline{x}(t)\oplus \mathcal{W}\subset \overline{\mathcal{X}};\quad x(T_f)\in x_f\oplus \mathcal{W}\subset \overline{\mathcal{X}} \nonumber\\
& u(t)=\overline{u}(t)+\nu(t,\tilde{x}(t))\in \overline{u}(t)\oplus \nu(t,\mathcal{W})\subset \mathcal{U}\nonumber  
\end{align}
\subsection{Lattice-based motion planner with robust constraint satisfaction}
The lattice-based motion-planning strategy converts a motion-planning COCP into a discrete graph-search problem by limiting the controls to discrete subsets of available actions, represented using a set of motion primitives. At the outset, the obstacle-free feasible state space is discretized as per a desired discretization. The discretized state space consists of all reachable states, that form the graph. Subsequently, motion primitives are computed, which are feasible state and control trajectories of the concerned system connecting one reachable state with another in the discretized state space.\par
To develop a lattice-based planner for the motion-planning problem in COCP \eqref{eq:path planning COCP reformulated}, a discretized state space $\mathcal{X}_d$ is obtained from the obstacle-free tightened feasible state space $\mathcal{X}_w$ for the nominal dynamics \eqref{eq:nominal dynamics}, defined as
\begin{equation}
\mathcal{X}_w \triangleq \mathcal{X}\ominus \mathcal{W} \nonumber
\end{equation} 
A strategy to obtain the discrete state space $\mathcal{X}_d$ for lattice-based motion-planning problem can be found in \cite{bergman}. A set of motion primitives $\mathcal{M}$, associated with the nominal system \eqref{eq:nominal dynamics}, is then constructed and a motion primitive $m\in \mathcal{M}$ is defined as follows
\begin{equation}
m=(\overline{x}(t),\overline{u}(t))\in \mathcal{X}_w\times \mathcal{U}_w,\quad t\in [0,T]\nonumber 
\end{equation}
A motion primitive $m\in \mathcal{M}$ represents a feasible trajectory which moves the system \eqref{eq:nominal dynamics} from an initial state $\overline{x}(0)\in \mathcal{X}_d$ to a final state $\overline{x}(T)\in \mathcal{X}_d$, by applying the control $\overline{u}(\cdot)\in \mathcal{U}_w$. The motion-planning COCP \eqref{eq:path planning COCP reformulated} can now be approximated by the following graph-search problem, posed as a discrete COCP, which is solved online
\begin{subequations}\label{eq:LP}
\begin{align}
\min_{\{m_k\}_{k=0}^{M-1},M}& \ \ J_m=\sum_{k=0}^{M-1}l_m(\overline{x}_k,m_k) \nonumber \\
& \overline{x}_0=x_{ini}; \quad \overline{x}_M=x_f\in \overline{X}_w \\
& \overline{x}_{k+1}=f_m(\overline{x}_k,m_k) \label{eq:LP state transition}\\
& m_k\in \mathcal{M}(\overline{x}_k) \label{eq:LP motion primitive}\\
& c(\overline{x}_k,m_k)\in \overline{X}_w \label{eq:LP feasibility constraint}
\end{align}
\end{subequations}
where the decision variables are the sequence of motion primitives $\{m_k\}_{k=0}^{M-1}$ and the number of phases $M$, which is required to move from the initial position to the final position. The state transition constraint \eqref{eq:LP state transition} governs the transition of the nominal system states as follows
\begin{subequations}\label{eq:state trasitin dynamics}
\begin{align}
& \overline{x}_k=\overline{x}(t),\quad \overline{x}_{k+1}=\overline{x}(t+T)\\
& \overline{x}_{k+1}=f_m(\overline{x}_k,m_k)=\overline{x}_{k}+\int_0^Tf(\overline{x}_{\bar{t}},\overline{u}_{\bar{t}})d\bar{t}
\end{align} 
\end{subequations}
The constraint \eqref{eq:LP motion primitive} dictates the state transition from any state $\overline{x}_k$, through the available motion primitives, which encode the dynamics \eqref{eq:state trasitin dynamics}. The feasibility of the state transitions is ensured through the constraint \eqref{eq:LP feasibility constraint}, which guarantees collision free motion for the system \eqref{eq:nominal dynamics} in $\overline{\mathcal{X}}_w$.\par
The motion produced by the lattice-based planner in \eqref{eq:LP} governs the overall motion of the uncertain system \eqref{eq:nonlinear uncertain dynamics}. The following corollary proves that the uncertain system states always remain within a region $\mathcal{W}$ around the planned motion, obtained from \eqref{eq:LP} by combining a finite number of pre-computed motion primitives.
\begin{propos}
Let $x_k=x(t)$ and ${x}_{k+1}=x(t+T)$. If $x(0)=x_{ini}$, then the state $x(t)$ of the uncertain system  \eqref{eq:nonlinear uncertain dynamics} satisfies the following for all $k\in \mathbb{N}_{[1:M]}$
\begin{align}
& x_k\in \overline{x}_k\oplus \mathcal{W}\nonumber \\
& x(\bar{t})=\overline{x}(\bar{t})\oplus \mathcal{W}, \forall \bar{t} \in (t+(k-1)T,t+kT)\nonumber
\end{align}
with the control input $u(\bar{t})=\overline{u}(\bar{t})+\nu(\bar{t},\tilde{x}(\bar{t})), \forall \bar{t} \in [t+(k-1)T,t+kT]$ , where the utilized nominal control $\overline{u}(\bar{t})$ is encoded in $m_k,\forall k\in \mathbb{N}_{[1:M-1]}$.
\end{propos}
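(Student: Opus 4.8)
The plan is to proceed by induction on the phase index $k$, using \emph{Corollary}~\ref{cor:nominal gheshe chola} as the single-phase building block and the construction of the motion primitives (which guarantees $\overline{x}_k \in \mathcal{X}_d \subset \overline{\mathcal{X}}_w$ at every node) to ensure that each phase starts from an admissible configuration. The key observation is that \emph{Corollary}~\ref{cor:nominal gheshe chola} is stated for an arbitrary nominal trajectory $(\overline{x}(t),\overline{u}(t))$ of \eqref{eq:nominal dynamics} with the composite controller $u=\overline{u}+\nu(t,\tilde{x})$, and its only hypothesis is that the \emph{initial} error lies in $\mathcal{W}$, i.e.\ $x(0)\in\overline{x}(0)\oplus\mathcal{W}$; it then yields $x(t)\in\overline{x}(t)\oplus\mathcal{W}$ for all $t$. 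So the entire argument reduces to checking that this initial-error condition propagates from phase to phase.

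The induction would run as follows. \textbf{Base case:} for $k=1$, the first primitive $m_0$ is a nominal trajectory on $[0,T]$ with $\overline{x}(0)=x_{ini}$, and by hypothesis $x(0)=x_{ini}$, so $\tilde{x}(0)=0\in\mathcal{W}$ (recall $0\in\mathcal{W}$ since $\mathcal{W}$ is a ball centred at the origin, from \eqref{eq:error reachable region}). Applying \emph{Corollary}~\ref{cor:nominal gheshe chola} over this phase with controller $u(\bar t)=\overline{u}(\bar t)+\nu(\bar t,\tilde{x}(\bar t))$ gives $x(\bar t)\in\overline{x}(\bar t)\oplus\mathcal{W}$ for all $\bar t\in[0,T]$; in particular $x_1=x(T)\in\overline{x}(T)\oplus\mathcal{W}=\overline{x}_1\oplus\mathcal{W}$. \textbf{Inductive step:} suppose $x_k\in\overline{x}_k\oplus\mathcal{W}$, equivalently $\tilde{x}(t+(k-1)T)\in\mathcal{W}$ at the start of phase $k$. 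The next primitive $m_k$ is a fresh nominal trajectory on the interval $[t+(k-1)T,\,t+kT]$ with initial nominal state exactly $\overline{x}_k$ (enforced by \eqref{eq:LP state transition}, \eqref{eq:state trasitin dynamics}). Re-invoking \emph{Corollary}~\ref{cor:nominal gheshe chola}, shifted to this interval and with this primitive as the nominal signal, the error-in-$\mathcal{W}$ property at the phase boundary is precisely the required initial condition, so we conclude $x(\bar t)\in\overline{x}(\bar t)\oplus\mathcal{W}$ for all $\bar t$ in the phase, and at the terminal node $x_{k+1}\in\overline{x}_{k+1}\oplus\mathcal{W}$. Closing the induction at $k=M$ gives $x_M\in x_f\oplus\mathcal{W}$, the terminal-set claim, and along the way one also reads off constraint satisfaction $x(\bar t)\in\overline{\mathcal{X}}$, $u(\bar t)\in\mathcal{U}$ using \eqref{eq:tightened constraints def} exactly as in the subsection on the reformulated COCP.

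The one subtlety I would be careful about — and which I expect to be the main (if modest) obstacle — is the hand-off of the error state \emph{across} primitive boundaries: \emph{Corollary}~\ref{cor:nominal gheshe chola} and \emph{Assumption}~\ref{as:lyapunov conditions} are phrased for a single continuous nominal trajectory defined on a time interval, whereas here the nominal signal $(\overline{x},\overline{u})$ is a concatenation of primitives that need only be continuous in the state $\overline{x}$ at the junctions (the nominal input $\overline{u}$ may jump). One must check that the UUB/invariance conclusion of \emph{Assumption}~\ref{as:lyapunov conditions}.2 is not destroyed by such switches: since the Lyapunov decrease inequality holds pointwise in $t$ and $\tilde{x}$ for \emph{every} admissible $(\overline{x},\overline{u})$, and since $\tilde{x}$ itself is continuous across the junction (both $x$ and $\overline{x}$ are continuous there), $V(t,\tilde{x}(t))$ does not jump and the set $\mathcal{W}$ remains forward invariant through the switch. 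Hence the per-phase application of \emph{Corollary}~\ref{cor:nominal gheshe chola} is legitimate, and restating it on each shifted interval $[t+(k-1)T,t+kT]$ is merely a matter of time-translation. The remainder is bookkeeping over the finite index set $\mathbb{N}_{[1:M]}$.
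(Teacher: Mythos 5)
Your proof is correct and follows essentially the same route as the paper: induction over the phase index, with \emph{Corollary}~\ref{cor:nominal gheshe chola} applied per primitive and the base case supplied by $x_0=\overline{x}_0=x_{ini}$ together with $0\in\mathcal{W}$. Your additional remark on the continuity of $\tilde{x}$ across primitive junctions is a point the paper's proof leaves implicit, but it does not change the argument.
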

\begin{proof}
The following proof is done by the method of induction. Let 
\begin{equation}\label{eq:induction eq 1}
x_k\in \overline{x}_k\oplus \mathcal{W}
\end{equation}
Then the following is concluded from \textit{Corollary} \ref{cor:nominal gheshe chola} with the control $u(\bar{t})=\overline{u}(\bar{t})+\nu(\bar{t},\tilde{x}(\bar{t}))$, where $\overline{u}(t)$ is the control action utilized in the active motion primitive $m_k$ 
\begin{subequations}\label{eq:induction eq 2}
\begin{align}
& x(\bar{t})=\overline{x}(\bar{t})\oplus \mathcal{W}, \forall \bar{t} \in (t,t+T)\\
&x_{k+1}=x(t+T)\in \overline{x}(t+T)\oplus \mathcal{W}=\overline{x}_{k+1}\oplus \mathcal{W}
\end{align}
\end{subequations}
Since, the initial condition of the uncertain system's state satisfy  $x_0=\overline{x}_0=x_{ini}$ and the origin is an interior point of $\mathcal{W}$ (from \eqref{eq:error reachable region}), by recursively utilizing \eqref{eq:induction eq 1}-\eqref{eq:induction eq 2}, it is proved that the claimed assertions hold.  
\end{proof}
\begin{algorithm}
\vspace{0in}
\caption {\small Robust lattice-based motion planner}
\begin{algorithmic}[]
\renewcommand{\algorithmicrequire}{\textbf{Offline:}}
\REQUIRE 
\end{algorithmic}
\vspace{-0.07in}
\begin{itemize}
\item Specify $\mathcal{X},\mathcal{U}$ and $f(\overline{x},\overline{u})$.
\item Design the controller $\nu(t,\tilde{x})$ and compute $\mathcal{W}$.
\item Obtain the tightened spaces $\mathcal{X}_w=\mathcal{X}\ominus \mathcal{W}$ and  $\mathcal{U}_w=\mathcal{U}\ominus \nu(t,\mathcal{W})$.
\item Discretize $\mathcal{X}_w$ to obtain $\mathcal{X}_d$.
\item Compute the set of motion primitives $\mathcal{M}$.
\end{itemize}
\vspace{-0.05in}
\begin{algorithmic}[1]
\renewcommand{\algorithmicensure}{\textbf{Online:}}
\ENSURE \hspace{-0.02in}
\STATE Take inputs $x_{ini}$, $x_f$ and $\mathcal{O}$
\STATE Solve the discrete COCP \eqref{eq:LP}
\STATE Measure $x(t)$ and compute $\tilde{x}(t)$
\STATE Apply the control $u(t)=\overline{u}(t)+\nu(t,\tilde{x})$
\end{algorithmic}
\end{algorithm}
\section{Case study: Euler-Lagrange (EL) system}
In this section, a lattice-based planner is designed for motion-planning of uncertain EL systems of the form
\begin{subequations}\label{eq:EL system}
\begin{align}
& M\dot{q}+V_m(q)q+F(q)q+G(x)=\tau+d\\
& \dot{x}=J(x)q
\end{align}
\end{subequations}
where $q(t)\in \mathbb{R}^n$ and $x(t)\in \mathbb{R}^n$ are the generalized velocity and position vectors of the system \eqref{eq:EL system}, respectively and $\tau(t)\in \mathbb{R}^n$ is the generalized torque applied to the system. $M$ is the inertial matrix, $V_m(q)$ is the centripetal-Coriolis matrix, $F(q)$ is the friction matrix, $G(x)$ is the gravity matrix and $J(x)$ is the rotation matrix. The system is affected by an exogenous disturbance $d(t)$, which satisfies \textit{Assumption} \ref{as:disturbance} and is subjected to the following constraints
\begin{equation}
x\in \mathcal{X};\quad \dot{x}\in \dot{\mathcal{X}}; \quad \tau \in \mathcal{U} \nonumber
\end{equation} 
\begin{assum}\label{as:on M and J}
The matrices $M$ and $J(x)$ satisfy the following  for all $q\in \mathbb{R}^n$
\begin{itemize}
\item[1.] $M$ and $J(x)$ are square-invertible matrices
\item[2.] $J(x)$ is bounded as $\|J(x)\|\leqslant \mu_j$
\end{itemize}
\end{assum}
\noindent The system in \eqref{eq:EL system} is reformulated following the approach in [\textit{Section} 7.5.1, \cite{fossen}] as
\begin{equation}\label{eq:EL system reformulated neww}
M^*(x)\ddot{x}+V_m^*(x,\dot{x})\dot{x}+F^*(x,\dot{x})\dot{x}+G^*(x)=(J^{-1}(x))^T(\tau+d)
\end{equation}
where \begin{small}$M^*(x)\triangleq J^{-1}(x)^TMJ^{-1}(x)$, $V_m^*(x,\dot{x})\triangleq (J^{-1}(x)^TV_m(x,J^{-1}(x)x)-J^{-1}(x)^T\dot{J}(x)J^{-1}(x))J^{-1}(x)$, $F^*(x,\dot{x})\triangleq J^{-1}(x)^TF(J^{-1}(x)\dot{x})J^{-1}(x)$\end{small} and \begin{small}$G^*(x)\triangleq J^{-1}(x)^TG(x)$\end{small}.
The dynamics in \eqref{eq:EL system reformulated neww} is further reformulated as follows
\begin{equation}\label{eq:EL system reformulated new}
\ddot{x}=\Phi(x,\dot{x})+\Theta(x) (\tau+d)
\end{equation}
where 
\begin{subequations}\label{eq:phi theta definitions}
\begin{align}
& \Phi(x,\dot{x})\triangleq -{M^*}^{-1}(x)\Big (V_m^*(x,\dot{x})\dot{x}+F^*(x,\dot{x})\dot{x}+G^*(x)\Big ) \\
& \Theta (x)\triangleq {M^*}^{-1}(x)J^{-1}(x)^T
\end{align}
\end{subequations}
The input $\tau$ is designed as 
\begin{equation}\label{eq:tau transformation}
\tau(t)=\Theta^{-1}(x)v(t)
\end{equation}
where $v(t)$ is the control input to be formulated and applied to the system \eqref{eq:EL system} through the transformation \eqref{eq:tau transformation}. 
\begin{assum}\label{ass:control constraint reformulation}
There exists a set $\mathcal{V}$ such that for all $v(t)\in \mathcal{V}$, $\mu_{\Theta}v(t) \in \mathcal{U}$, where $\mu_{\Theta}\triangleq \max_{x\in \mathcal{X}}\|\Theta^{-1}(x)\|$
\end{assum}
Utilizing \eqref{eq:tau transformation} in \eqref{eq:EL system reformulated new}, the following is obtained
\begin{equation}\label{eq:EL final}
\ddot{x}=\Phi(x,\dot{x})+v+\Theta(x)d
\end{equation}
\begin{remark}\label{rem:on theta bound}
Since $J(x)$ is bounded (\textit{Assumption} \ref{as:on M and J}), the matrix $\Theta (x)$ is also bounded by some known scalar $\mu_{\theta}$ as $\|\Theta (x)\|\leqslant \mu_{\theta}, \forall x\in \mathbb{R}^n$.
\end{remark}
\noindent The nominal dynamics associated with \eqref{eq:EL system} is given as
\begin{subequations}\label{eq:EL nominal system}
\begin{align}
& M\dot{\overline{q}}+V_m({\overline{q}}){\overline{q}}+F({\overline{q}})\overline{q}+G(\overline{x})=\overline{\tau}\\
& \dot{\overline{x}}=J(\overline{x})\overline{q}
\end{align}
\end{subequations}
The dynamics \eqref{eq:EL nominal system} is reformulated following similar steps in \eqref{eq:EL system reformulated neww}-\eqref{eq:EL system reformulated new}, to obtain
\begin{equation}\label{eq:EL nominal dynamics reformulated new}
\ddot{\overline{x}}=\Phi(\overline{x},\dot{\overline{x}})+\Theta(\overline{x}) \overline{\tau}
\end{equation}  
where $\Phi(\cdot,\cdot)$ and $\Theta(\cdot)$ are defined in \eqref{eq:phi theta definitions}. The control input to the system \eqref{eq:EL nominal dynamics reformulated new} is also transformed following
\begin{equation}\label{eq:tau nominal transformation}
\overline{\tau}(t)=\Theta^{-1}(\overline{x})\overline{v}(t)
\end{equation}  
The overall nominal dynamics with the reformulated input in \eqref{eq:tau nominal transformation} is given as
\begin{equation}\label{eq:EL nominal final}
\ddot{\overline{x}}=\Phi(\overline{x},\dot{\overline{x}})+\overline{v}
\end{equation}
The error dynamics associated with the uncertain system \eqref{eq:EL final} and the nominal dynamics \eqref{eq:EL nominal final} is given as follows
\begin{equation}
\ddot{\tilde{x}}=\tilde{\Phi}(\tilde{x},\dot{\tilde{x}},\overline{x},\dot{\overline{x}})+\tilde{v}+\Theta(x)d\nonumber
\end{equation}
where $\tilde{x}\triangleq x-\overline{x}$, $\tilde{v}\triangleq v-\overline{v}$ and $\tilde{\Phi}(\tilde{x},\dot{\tilde{x}},\overline{x},\dot{\overline{x}})\triangleq \Phi (x,\dot{x})-\Phi(\overline{x},\dot{\overline{x}})$. The objective is to find the region $\mathcal{W}$ (defined in \textit{Corollary} \ref{cor:nominal gheshe chola}) for the states $[\tilde{x}^T,\dot{\tilde{x}}^T]^T$ by designing a suitable control $\tilde{v}$.
\subsection{Formulating robust lattice-based planner for uncertain EL system}
A feedback control input $\tilde{v}$ is chosen as follows
\begin{equation}\label{eq:feedback tilda v}
\tilde{v}(\tilde{x},\dot{\tilde{x}})=-\tilde{\Phi}(\tilde{x},\dot{\tilde{x}},\overline{x},\dot{\overline{x}})-k_1k_2\tilde{x}-(k_1+k_2)\dot{\tilde{x}}
\end{equation}
where $k_1>0$ and $k_2>0$ are chosen by the designer. The closed-loop error dynamics with the feedback control $\tilde{v}$ in \eqref{eq:feedback tilda v} is given as
\begin{equation}\label{eq:closed-loop error dynamics}
\ddot{\tilde{x}}=-k_1k_2\tilde{x}-(k_1+k_2)\dot{\tilde{x}}+\Theta(x)d
\end{equation}
\begin{theorem}\label{thm:EL stability analysis}
The closed-loop system \eqref{eq:closed-loop error dynamics} is UUB with known ultimate bound.
\end{theorem}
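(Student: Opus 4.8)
The plan is to exhibit an explicit Lyapunov-type function for the linear-in-error closed-loop dynamics \eqref{eq:closed-loop error dynamics} and to verify conditions 1 and 2 of \textit{Assumption} \ref{as:lyapunov conditions}, since once those hold the UUB conclusion (and the explicit form of the ultimate bound $\mathcal{W}$ via \eqref{eq:error reachable region}) follows from the discussion already given around \eqref{eq:all abount error rechable region}. First I would rewrite \eqref{eq:closed-loop error dynamics} as a first-order linear system in the stacked state $\xi \triangleq [\tilde{x}^T,\dot{\tilde{x}}^T]^T$, namely $\dot{\xi} = A\xi + B\Theta(x)d$ with $A = \begin{bmatrix} 0 & I \\ -k_1k_2 I & -(k_1+k_2)I \end{bmatrix}$ and $B = \begin{bmatrix} 0 \\ I\end{bmatrix}$. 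The key observation is that each decoupled scalar channel has characteristic polynomial $s^2 + (k_1+k_2)s + k_1k_2 = (s+k_1)(s+k_2)$, so $A$ is Hurwitz for any $k_1,k_2>0$; hence there is a symmetric positive-definite $P$ solving the Lyapunov equation $A^TP + PA = -Q$ for a chosen $Q\succ 0$ (for instance $Q=I$), and $P$ can be written down in closed form in terms of $k_1,k_2$.

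Next I would take $V(\xi) = \xi^T P \xi$ (time-independent, so $\partial V/\partial t = 0$) and check the two assumption conditions: condition 1 holds with $\alpha_1(\|\xi\|) = \lambda_{\min}(P)\|\xi\|^2$ and $\alpha_2(\|\xi\|) = \lambda_{\max}(P)\|\xi\|^2$; condition 2 follows by computing $\dot V = -\xi^TQ\xi + 2\xi^TPB\Theta(x)d$ and bounding the cross term using $\|\Theta(x)\|\leqslant \mu_\theta$ (\textit{Remark} \ref{rem:on theta bound}) and $\|d\|\leqslant \overline d$ (\textit{Assumption} \ref{as:disturbance}), giving $\dot V \leqslant -\lambda_{\min}(Q)\|\xi\|^2 + 2\|P\|\mu_\theta\overline d\,\|\xi\|$. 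To match the $-\alpha_3(\|\xi\|)+\beta(\overline d)$ template I would split the negative term, writing $\dot V \leqslant -\tfrac{1}{2}\lambda_{\min}(Q)\|\xi\|^2$ whenever $\|\xi\| \geqslant 4\|P\|\mu_\theta\overline d/\lambda_{\min}(Q)$, or equivalently use Young's inequality $2\|P\|\mu_\theta\overline d\,\|\xi\| \leqslant \tfrac{1}{2}\lambda_{\min}(Q)\|\xi\|^2 + 2\|P\|^2\mu_\theta^2\overline d^{\,2}/\lambda_{\min}(Q)$, so that $\alpha_3(\|\xi\|) = \tfrac12\lambda_{\min}(Q)\|\xi\|^2$ and $\beta(\overline d) = 2\|P\|^2\mu_\theta^2\overline d^{\,2}/\lambda_{\min}(Q)$ are both positive definite in their arguments. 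This instantiates \textit{Assumption} \ref{as:lyapunov conditions}, so the closed loop is UUB.

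Finally, to make the ultimate bound \emph{known} and explicit (not merely existent), I would trace through the standard UUB estimate with a quadratic $V$: the sublevel set of $V$ that traps the trajectory is $\{\xi: V(\xi)\leqslant \lambda_{\max}(P)\,r^2\}$ with $r = \alpha_3^{-1}(\beta(\overline d)) = \sqrt{2\beta(\overline d)/\lambda_{\min}(Q)}$, which in turn is contained in the ball $\|\xi\|\leqslant \sqrt{\lambda_{\max}(P)/\lambda_{\min}(P)}\;r$; this ball (expressed purely in terms of $k_1,k_2,\mu_\theta,\overline d$) is the set $\mathcal{W}$. I would remark that $\tilde x(0)=0$ lies in $\mathcal{W}$, so by \eqref{eq:UUB asli wala} the error in fact never leaves $\mathcal{W}$, which is exactly what \textit{Corollary} \ref{cor:nominal gheshe chola} and the subsequent proposition need. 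The main obstacle is not the Lyapunov analysis itself — that is routine for a Hurwitz linear system — but rather the subtlety that $\tilde\Phi$ was cancelled exactly in \eqref{eq:feedback tilda v}; I should note that this cancellation requires $\Phi$ (hence $M^*,V_m^*,F^*,G^*$) to be well defined along the actual trajectory, i.e. $J(x)$ invertible and the state remaining in $\mathcal{X}$, which is consistent with \textit{Assumption} \ref{as:on M and J} and the constraint-tightening construction, and I would state this as the standing hypothesis under which the bound holds.
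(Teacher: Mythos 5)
Your proof is correct, but it takes a genuinely different route from the paper's. The paper does not invoke the general Lyapunov-equation machinery: it introduces the filtered tracking error $r=\dot{\tilde{x}}+k_1\tilde{x}$ (exploiting the factorization $s^2+(k_1+k_2)s+k_1k_2=(s+k_1)(s+k_2)$ that you also noticed, but in a backstepping-style change of coordinates rather than through the eigenvalues of $A$), takes the hand-picked quadratic function $V=r^Tr+\Gamma\tilde{x}^T\tilde{x}$, and completes squares directly to obtain $\dot V\leqslant -(k_2-\Gamma/k_1)\|r\|^2-\Gamma k_1\|\tilde{x}\|^2+D^2/k_2$ under the gain condition $k_1k_2>\Gamma$. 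What that buys is fully explicit, \emph{separate} ultimate bounds $\|\tilde{x}\|\leqslant C_1D$ and $\|\dot{\tilde{x}}\|\leqslant C_3D$ with $C_1,C_3$ written directly in terms of $k_1,k_2,\Gamma$; these separate bounds are exactly what Corollary~\ref{cor:EL system W set} needs to define $\mathcal{W}_{\tilde{x}}$ and $\mathcal{W}_{\dot{\tilde{x}}}$ and to tighten $\mathcal{X}$ and $\dot{\mathcal{X}}$ independently. Your approach is more systematic and is the one that actually instantiates Assumption~\ref{as:lyapunov conditions} as stated; it also correctly inserts the $\sqrt{\lambda_{\max}(P)/\lambda_{\min}(P)}$ inflation that the paper's generic description of $\mathcal{W}$ in \eqref{eq:error reachable region} glosses over, and your remark about the well-definedness of the exact cancellation of $\tilde{\Phi}$ (invertibility of $J(x)$, state remaining in $\mathcal{X}$) is a legitimate standing hypothesis the paper leaves implicit. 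The costs of your route are that the bound is ``known'' only after you actually solve $A^TP+PA=-Q$ in closed form (straightforward here since the system decouples into $n$ identical two-dimensional channels, but you should carry it out), and that you obtain a single ball in the stacked state $\xi$ rather than separate, gain-tunable bounds on position and velocity errors, so a further projection step would be needed to recover the two sets used downstream.
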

\begin{proof}
To analyse the closed-loop stability of \eqref{eq:closed-loop error dynamics}, a filtered tracking error $r$ associated with the tracking error $\tilde{x}$ is defined as follows
\begin{equation}\label{eq:filterd regressor}
r=\dot{\tilde{x}}+k_1\tilde{x} 
\end{equation} 
A Lyapunov function candidate $V(\tilde{x})$ is defined as follows
\begin{equation}
V(\tilde{x})= r^Tr+{\Gamma} \tilde{x}^T\tilde{x}\nonumber
\end{equation}
where $\Gamma>0$ is a scalar. The derivative of $V(\tilde{x})$ is computed as follows 
\begin{align}
&\dot{V}(\tilde{x})= 2\big (r^T\dot{r}+\Gamma\tilde{x}^T\dot{\tilde{x}}\big )\nonumber \\
= & 2\big ((\dot{\tilde{x}}+k_1\tilde{x})^T(\ddot{\tilde{x}}+k_1\dot{\tilde{x}})+\Gamma\tilde{x}^T(r-k_1\tilde{x})\big ) \nonumber \\
= & 2\big ((\dot{\tilde{x}}+k_1\tilde{x})^T(-k_2(\dot{\tilde{x}}+k_1\tilde{x})+\Theta(x)d)+\Gamma\tilde{x}^Tr-\Gamma k_1\tilde{x}^T\tilde{x}\big )\nonumber \\
= & 2\big(-k_2r^Tr+r^T\Theta(x)d+\Gamma\tilde{x}^Tr-\Gamma k_1\tilde{x}^T\tilde{x}\big) \label{eq:lyapunov analysis contd}
\end{align}
Since $\Theta (x)$ and $d$ are bounded (from \textit{Remark} \ref{rem:on theta bound} and \textit{Assumption} \ref{as:disturbance}), the term $\Theta(x)d$ is also bounded as $\|\Theta(x)d\|\leqslant \mu_{\theta}\bar{d}=D$. Utilizing this in \eqref{eq:lyapunov analysis contd}, the following is obtained
\begin{align}
&\dot{V}(\tilde{x})\leqslant 2\big( -k_2\|r\|^2+\|r\|D-\Gamma k_1\|\tilde{x}\|^2+\Gamma \|\tilde{x}\|\|r\|\big ) \nonumber \\
= &-k_2\|r\|^2 - {\Gamma k_1}\|\tilde{x}\|^2 -\Big (\sqrt{k_2}\|r\|-\frac{D}{\sqrt{k_2}}\Big )^2+\frac{D^2}{k_2}\nonumber \\
 &-\Big(\sqrt{\frac{\Gamma}{k_1}}\|r\|-\sqrt{{\Gamma k_1}}\|\tilde{x}\| \Big)^2+\frac{\Gamma}{k_1}\|r\|^2\nonumber \\
 \leqslant &- \Big (k_2-\frac{\Gamma}{k_1}\Big ) \|r\|^2 - \Gamma k_1\|\tilde{x}\|^2 +\frac{D^2}{k_2} \label{eq:ekhhuni lagbe}
\end{align}
It is inferred from \eqref{eq:ekhhuni lagbe}, that with the following gain condition
\begin{equation}\label{eq:gain condition}
k_1k_2>\Gamma
\end{equation}
$V(\tilde{x})$ decreases if
\begin{equation}\label{eq:lyapunov analysis conclusion}
\Big (k_2-\frac{\Gamma}{k_1}\Big ) \|r\|^2 + \Gamma k_1\|\tilde{x}\|^2\geqslant \frac{D^2}{k_2}
\end{equation} 
Therefore, the closed-loop system \eqref{eq:closed-loop error dynamics} is ultimately bounded, with the ultimate bound on $\tilde{x}$ and $r$ being characterized utilizing \eqref{eq:gain condition} and \eqref{eq:lyapunov analysis conclusion} as follows
\begin{equation}\label{eq:the ultimate bound 1}
\|\tilde{x}\|\leqslant C_1D; \quad \|r\|\leqslant C_2D
\end{equation}
where $C_1\triangleq \frac{1}{\sqrt{\Gamma k_1k_2}}$ and $C_2\triangleq \sqrt{\frac{k_1}{k_1k_2^2-k_2\Gamma }}$.
Utilizing \eqref{eq:filterd regressor} and \eqref{eq:the ultimate bound 1}, the ultimate bound on $\dot{\tilde{x}}$ is computed as follows
\begin{equation}\label{eq:the ultimate bound 2}
\|\dot{\tilde{x}}\|=k_1\|\tilde{x}\|+\|r\| \leqslant C_3D
\end{equation}
where $C_3=k_1C_1+C_2$. This concludes the proof.
\end{proof}
\begin{corollary}\label{cor:EL system W set}
The ultimate bounds in \eqref{eq:the ultimate bound 1} and \eqref{eq:the ultimate bound 2} are utilized to characterize the regions $\mathcal{W}_{\tilde{x}}$ and $\mathcal{W}_{\dot{\tilde{x}}}$, such that
\begin{align}
\mathcal{W}_{\tilde{x}}=& \Big \{\tilde{x}\in \mathbb{R}^{n} \ \Big | \ \|\tilde{x}\|\leqslant C_1 D, C_1= \frac{1}{\sqrt{\Gamma k_1k_2}}, \Gamma <k_1k_2\Big \} \nonumber \\ 
\mathcal{W}_{\dot{\tilde{x}}}=& \Big \{\dot{\tilde{x}}\in \mathbb{R}^{n} \ \Big | \ \|\dot{\tilde{x}}\|\leqslant C_3 D, C_3= \frac{k_1}{\sqrt{\Gamma k_1k_2}}+\nonumber \\
&\sqrt{\frac{k_1}{k_1k_2^2-k_2\Gamma }}, \Gamma <k_1k_2\Big \} \nonumber
\end{align} 
\end{corollary}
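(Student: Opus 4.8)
The plan is to obtain Corollary~\ref{cor:EL system W set} as an immediate consequence of Theorem~\ref{thm:EL stability analysis} together with the bounds already extracted in its proof. First I would recall that, under the gain condition $\Gamma<k_1k_2$ (which is \eqref{eq:gain condition}), the Lyapunov analysis establishing Theorem~\ref{thm:EL stability analysis} already yields the component-wise ultimate bounds $\|\tilde{x}\|\leqslant C_1D$ in \eqref{eq:the ultimate bound 1} and $\|\dot{\tilde{x}}\|\leqslant C_3D$ in \eqref{eq:the ultimate bound 2}, where $D=\mu_{\theta}\bar{d}$ is a known scalar by Remark~\ref{rem:on theta bound} and Assumption~\ref{as:disturbance}. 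Since $C_1$ and $C_3$ are explicit functions of the design parameters $k_1,k_2,\Gamma$, the two sets in the statement are simply the closed Euclidean balls of radii $C_1D$ and $C_3D$; the side condition $\Gamma<k_1k_2$ appended to each set description merely records the requirement that makes $C_1,C_2,C_3$ well defined and positive.

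The second step is to identify the product set $\mathcal{W}\triangleq\mathcal{W}_{\tilde{x}}\times\mathcal{W}_{\dot{\tilde{x}}}$ with the region $\mathcal{W}$ appearing in Assumption~\ref{as:lyapunov conditions}/Corollary~\ref{cor:nominal gheshe chola}, now applied to the augmented error state $[\tilde{x}^T,\dot{\tilde{x}}^T]^T$ of the closed loop \eqref{eq:closed-loop error dynamics}. Concretely, I would take $V(\tilde{x})$ as the UUB Lyapunov function, note that \eqref{eq:lyapunov analysis conclusion} shows $\dot{V}<0$ outside a bounded region, and invoke the standard UUB argument to conclude that $[\tilde{x}^T,\dot{\tilde{x}}^T]^T$ enters and remains in $\mathcal{W}_{\tilde{x}}\times\mathcal{W}_{\dot{\tilde{x}}}$; moreover, since $\tilde{x}(0)=0$ and $\dot{\tilde{x}}(0)=0$ lie in the interior (because $C_1D>0$ and $C_3D>0$), the error state stays in this set for all time, which is exactly what Corollary~\ref{cor:nominal gheshe chola} requires of $\mathcal{W}$.

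The one point needing a little care — and the closest thing to a genuine obstacle — is that the decrease region in \eqref{eq:lyapunov analysis conclusion} is an ellipsoid in the $(\tilde{x},r)$ coordinates and is not literally a sublevel set of $V(\tilde{x})=r^Tr+\Gamma\tilde{x}^T\tilde{x}$, so a fully rigorous ultimate-bound argument would pass to the smallest sublevel set $\{V\leqslant c\}$ containing that ellipsoid. However, because each quadratic term is separately bounded on the ellipsoid, one still recovers $\|\tilde{x}\|\leqslant C_1D$ and $\|r\|\leqslant C_2D$, and then $\|\dot{\tilde{x}}\|=\|r-k_1\tilde{x}\|\leqslant\|r\|+k_1\|\tilde{x}\|\leqslant C_3D$ via the triangle inequality and \eqref{eq:filterd regressor}. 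Collecting these three inequalities gives precisely the sets $\mathcal{W}_{\tilde{x}}$ and $\mathcal{W}_{\dot{\tilde{x}}}$ claimed; any conservatism introduced by this step enlarges $\mathcal{W}$ but does not affect the validity of the tube construction, completing the argument.
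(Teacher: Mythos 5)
Your proposal is correct and follows essentially the same route as the paper, which gives no separate proof of Corollary~\ref{cor:EL system W set} and treats it as an immediate repackaging of the ultimate bounds \eqref{eq:the ultimate bound 1} and \eqref{eq:the ultimate bound 2} from Theorem~\ref{thm:EL stability analysis} under the gain condition \eqref{eq:gain condition}. Your third paragraph is in fact more careful than the source: the paper silently reads off the semi-axes of the ellipsoidal decrease region in \eqref{eq:lyapunov analysis conclusion} as the ultimate bounds without passing to the enclosing sublevel set of $V$, so the caveat you raise (and the possible enlargement of the constants it entails) points at a looseness in the paper's own argument rather than a gap in yours.
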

\noindent It is evident from \textit{Theorem} \ref{thm:EL stability analysis} and \textit{Corollary} \ref{cor:EL system W set} that 
\begin{equation}
\tilde{x}(t)\in \mathcal{W}_{\tilde{x}}, \dot{\tilde{x}}(t)\in \mathcal{W}_{\dot{\tilde{x}}}, \forall t\in \mathbb{R} \ \textnormal{if} \quad \tilde{x}(0)\in \mathcal{W}_{\tilde{x}},\dot{\tilde{x}}(0)\in \mathcal{W}_{\dot{\tilde{x}}}\nonumber
\end{equation}
\begin{remark}
The bounded region for the controller $\tilde{v}(\tilde{x},\dot{\tilde{x}})$ defined in \eqref{eq:feedback tilda v} is found by utilizing its upper bound as follows
\begin{equation}\label{eq:vtilde upperbound}
\|\tilde{v}(\tilde{x}´,\dot{\tilde{x}})\|=\|\tilde{\Phi}(\tilde{x},\dot{\tilde{x}},\overline{x},\dot{\overline{x}})\|+k_1k_2\|\tilde{x}\|+(k_1+k_2)\|\dot{\tilde{x}}\|
\end{equation}
The upper bounds for $\|\tilde{x}\|$ and $\|\dot{\tilde{x}}\|$ are characterized in \eqref{eq:the ultimate bound 1} and \eqref{eq:the ultimate bound 2}, respectively. The upper bound for the term $\|\tilde{\Phi}(\tilde{x},\dot{\tilde{x}},\overline{x},\dot{\overline{x}})\|$ is computed utilizing the mean value theorem as follows
\begin{align}\label{eq:phi upperbound}
& \|\tilde{\Phi}(\tilde{x},\dot{\tilde{x}},\overline{x},\dot{\overline{x}})\|=\|\Phi (x,\dot{x})-\Phi(\overline{x},\dot{\overline{x}})\|\nonumber \\
\leqslant & \underbrace{\max_{\stackon[0.5pt]{\begin{scriptsize}$\dot{x}\in \dot{\mathcal{X}}$\end{scriptsize}}{\begin{scriptsize}$x \in \mathcal{X}$\end{scriptsize}}}\left (\left \|\frac{\partial \Phi (x,\dot{x})}{\partial x}\right \|\right )}_{g_1}\|\tilde{x}\|+\underbrace{\max_{\stackon[0.5pt]{\begin{scriptsize}$\dot{x}\in \dot{\mathcal{X}}$\end{scriptsize}}{\begin{scriptsize}$x \in \mathcal{X}$\end{scriptsize}}}\left (\left \|\frac{\partial \Phi (x,\dot{x})}{\partial \dot{x}}\right \|\right )}_{g_2}\|\dot{\tilde{x}}\|\nonumber \\
= & g_1\|\tilde{x}\|+g_2\|\dot{\tilde{x}}\|
\end{align}
Therefore, utilizing \eqref{eq:vtilde upperbound}, \eqref{eq:phi upperbound} and \textit{Corollary} \ref{cor:EL system W set}, the following is concluded
\begin{align}
&\|\tilde{v}(\tilde{x}´,\dot{\tilde{x}})\|  \leqslant (g_1+k_1k_2)\|\tilde{x}\|+(g_2+k_1+k_2)\|\dot{\tilde{x}}\|\nonumber \\
\Rightarrow & \ \tilde{v}(\tilde{x}´,\dot{\tilde{x}})\in (g_1+k_1k_2)\mathcal{W}_{\tilde{x}}\oplus (g_2+k_1+k_2)\mathcal{W}_{\dot{\tilde{x}}}\nonumber \\
& \hspace{0.45in} \triangleq \tilde{v}(\mathcal{W}_{\tilde{x}},\mathcal{W}_{\dot{\tilde{x}}})\nonumber
\end{align}
\end{remark}\noindent
The lattice-based motion planner for the uncertain system \eqref{eq:EL final} is then designed following \eqref{eq:LP} and implemented  following \textbf{Algorithm} $1$. The imposed constraints are tightened as follows
\begin{equation}
\mathcal{X}_w \triangleq \mathcal{X}\ominus \mathcal{W}_{\tilde{x}}; \dot{\mathcal{X}}_w \triangleq \dot{\mathcal{X}}\ominus \mathcal{W}_{\dot{\tilde{x}}}; \mathcal{U}_w\triangleq \mathcal{U} \ominus \mu_{\Theta}\tilde{v}(\mathcal{W}_{\tilde{x}},\mathcal{W}_{\dot{\tilde{x}}})\nonumber
\end{equation}
where $\mu_{\Theta}$ is defined in \textit{Assumption} \ref{ass:control constraint reformulation}. The motion primitives are computed utilizing the nominal dynamics \eqref{eq:EL nominal final}.  
\section{Simulation Results}
In this section, the efficacy of the proposed strategy is validated through a simulation example, which considers a ship model as follows
\begin{align}
& M\dot{q}+V_m(q)q+F(q)q=\tau+d \nonumber \\
& \dot{x}=J(x)q \nonumber
\end{align}
where the variables $q=[q_1,q_2,q_3]^T,x=[x_1,x_2,x_3]^T,\tau=[\tau_1,\tau_2,\tau_3]^T$ and the parameters $M, V_m(\cdot), F(\cdot),J(\cdot)$ have the same meaning as in \eqref{eq:EL system}. The states $[x_1,x_2]^T$ are the generalized position of the ship and $x_3$ represents the heading angle of the ship. The inputs $\tau_1$ and $\tau_2$ are the applied longitudinal and the lateral forces, respectively and $\tau_3$ is the yaw torque. The constraints on the input $\tau$, are given as follows
\begin{equation}
\|\tau_1\|\leqslant 1.9\times 10^6; \ \|\tau_2\|\leqslant 1.9\times 10^6; \ \|\tau_3\|\leqslant 3.9\times 10^7 \nonumber
\end{equation}\noindent
The wind disturbance $d(t)=[d_1(t),d_2(t),d_3(t)]^T$ affecting the ship satisfies the bound given as follows
\begin{equation}
\|d(t)\|_W\leqslant 1; W^{\frac{1}{2}}=diag\left (\frac{1}{2\times 10^5},\frac{1}{12\times 10^6},\frac{1}{16\times 10^6}\right ) \nonumber
\end{equation}
\begin{figure}[H]
\begin{center}
\centering
\vspace{-0.1in}
\includegraphics[width=3.4in,height=2.2in]{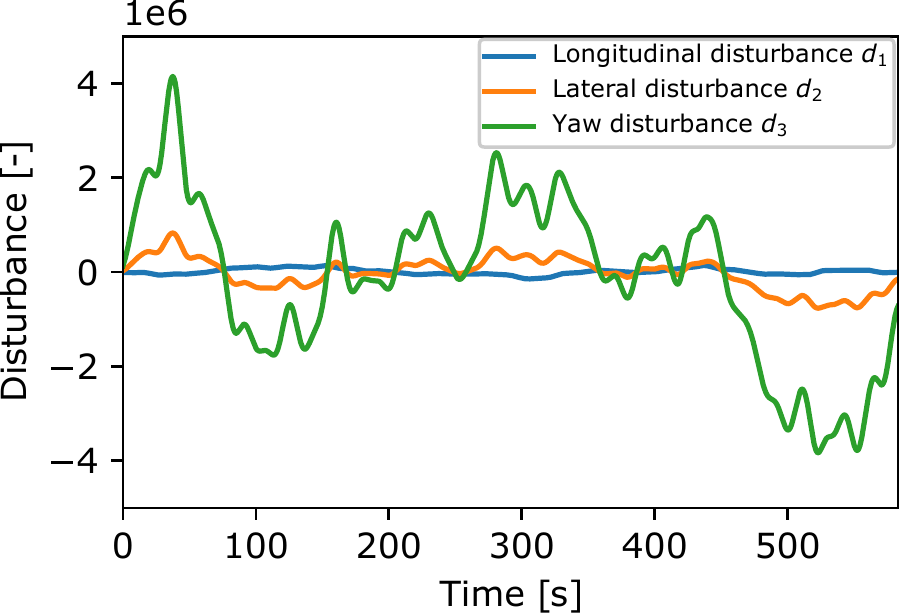}
\caption{The disturbance affecting the ship}
\label{fig:dist}
\vspace{-0.2in}
\end{center}
\end{figure} 
\noindent The longitudinal component of the wind disturbance is represented by $d_1(t)$, while $d_2(t)$ represents the lateral component. The disturbance component $d_3(t)$ represents the torque due to the wind disturbances on the body of the ship and this component has the largest magnitude among all. The nature of the disturbance $d(t)$ is illustrated in \textit{Fig.} \ref{fig:dist}. The applied feedback control $\tilde{v}(t,\tilde{x})$ has the same structure as \eqref{eq:feedback tilda v} with $k_1=k_2=0.1$. The value of $\Gamma$, that characterizes the ultimate bounds for the error $\tilde{x}$ is chosen as $\Gamma=0.009$. The plot of a subset of the motion primitives associated with the nominal disturbance-free model of the ship, with various heading angles is shown in \textit{Fig.} \ref{fig:mp}. The motion primitives are generated by solving \eqref{eq:path planning COCP reformulated} with the same discretization scheme as in \cite{bergman} using CasADi \cite{casadi} and IPOPT \cite{ipopt}. The shaded region along each of the primitives represents the tubes, which are supposed to contain the actual trajectory of the ship under the effect of the wind disturbance. The initial states of the ship are chosen as $x(0)=[50, 100, \pi /4]^T$, $q(0)=[v_{max},0,0]^T$, where $v_{max}=6$  knots and the final states are $x_f\hspace{-0.02in}=\hspace{-0.02in}[900, 900, \pi /2]^T$, $q_f\hspace{0.02in}=\hspace{0.01in}[0,0,0]^T$. \textit{Fig} \ref{fig:sm} shows the motion of the ship from the 
\begin{figure}[H]
\begin{center}
\centering
\vspace{-0.1in}
\includegraphics[width=3.3in,height=3.3in]{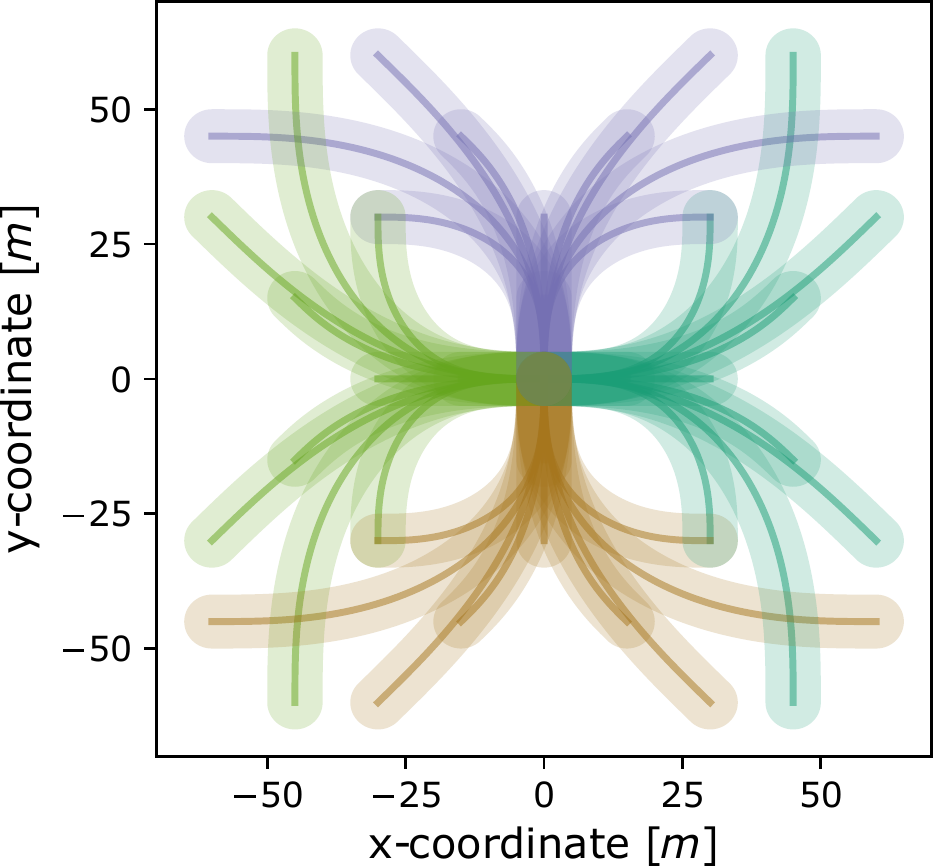}
\caption{Motion primitives with $q_1(0)\hspace{-0.03in}=\hspace{-0.03in}3$ knots and $x_3(0)\hspace{-0.02in}\in \hspace{-0.02in}\{0,\pi/2,\pi,3\pi/2\}$}
\label{fig:mp}
\vspace{-0.2in}
\end{center}
\end{figure}
\noindent chosen initial state to the final state in a environment with obstacles, for two different scenarios. \textit{Fig} \ref{fig:sm}(a) illustrates the motion of the ship under the effect of the wind disturbance with both the nominal and the feedback controller \eqref{eq:feedback tilda v} being implemented, but without considering the tubes while planning. As a result, the nominal plan is impaired from foreseeing the worst-case effect of the disturbance on the ship, due to which there is visible collision during the implemented motion, as highlighted in the zoomed box. In contrast, when the tubes are considered during motion-planning, the ship takes a detour around the obstacles to generate a collision free motion, as evident from  \textit{Fig} \ref{fig:sm}(b). The plots of the norm of error states $\hat{x}\triangleq [\tilde{x}_1,\tilde{x}_2]^T$ and the error in heading angle $\tilde{x}_3$ is shown in \textit{Fig.} \ref{fig:se}. It is seen that the errors are contained within the computed ultimate bound, which~ characterizes the tubes~ around~ the primitives. \textit{Fig.} \ref{fig:ci} illustrates the implemented control inputs along with their corresponding bounds. The nominal input is responsible for producing the nominal trajectory while the applied input, which includes the feedback controller \eqref{eq:feedback tilda v}, is responsible for keeping the overall motion of the disturbed ship within the tubes around the nominal trajectory.
\begin{figure*}[!t]
\begin{center}
\vspace{-0.15in}
\hspace{-0.85in}\begin{subfigure}{.15\textwidth}
\centering
\includegraphics[width=3.2in,height=3.2in]{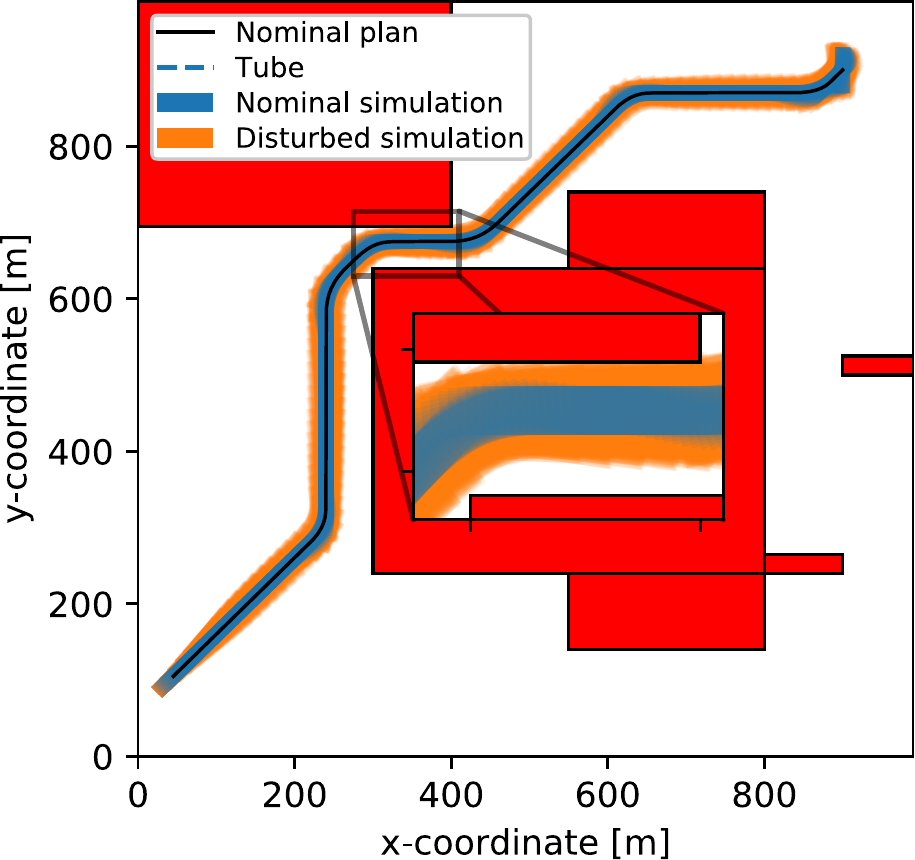}
\caption{\hspace{1.6in}(a)} 
\label{fig:law 6}
\end{subfigure}\begin{subfigure}{1.2\textwidth}
\centering
\includegraphics[width=3.2in,height=3.2in]{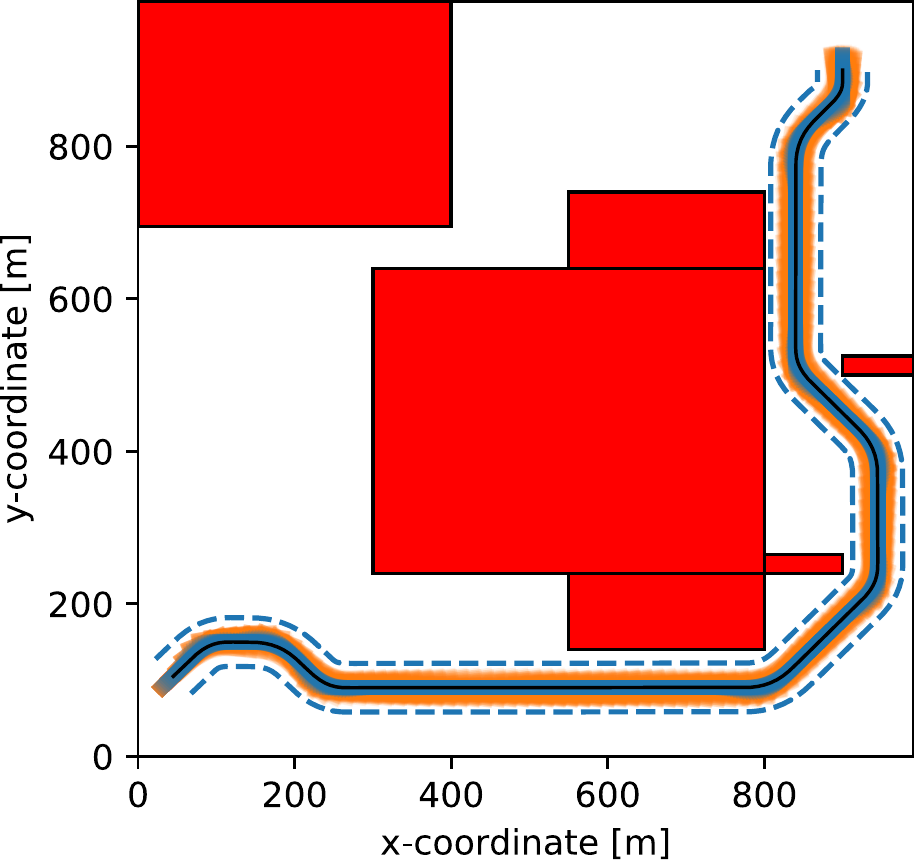}
\caption{\hspace{0.4in}(b)}  
\label{fig:law 19}
\end{subfigure}
\caption{motion-planning with wind disturbance: (a) without considering tubes (zoomed coordinates:- x-coordinate: [290m,450m], y-coordinate: [620m,730m]) (b) considering tubes}
\label{fig:sm}
\end{center}\vspace{-0.1in}
\end{figure*}
\begin{figure}[H]
\begin{center}
\centering
\vspace{-0.1in}
\includegraphics[width=3.4in,height=2.7in]{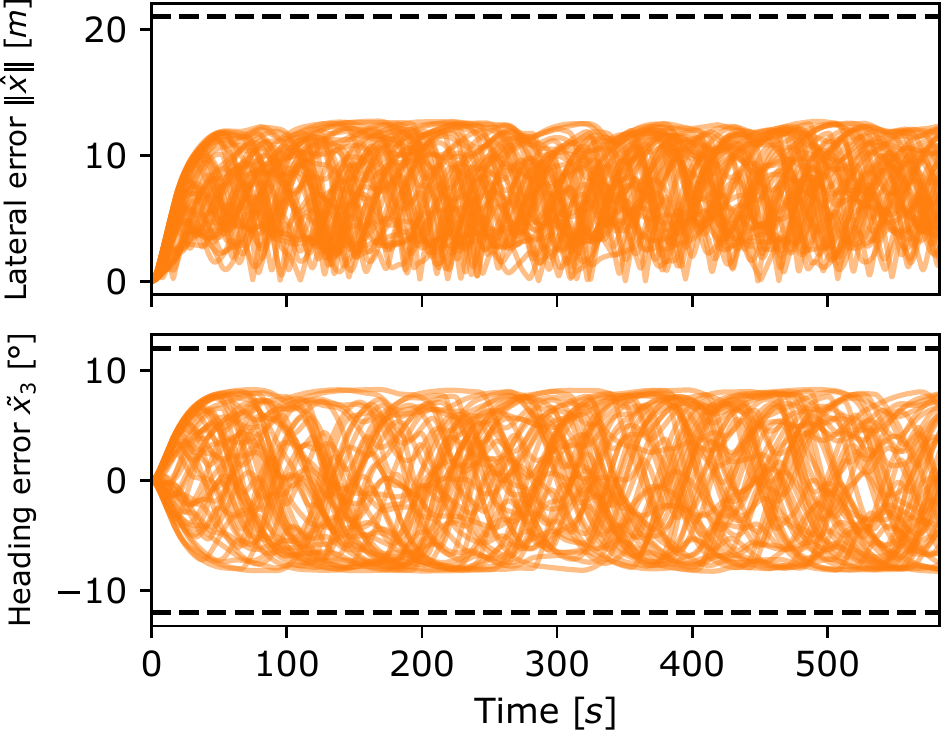}
\caption{Error in states due to mismatch between nominal and uncertain models while executing the motion in 3(b)}
\label{fig:se}
\vspace{-0.2in}
\end{center}
\end{figure}
\section{Conclusion}
A  novel robust lattice-based motion planner is proposed, which handles nonlinear systems subjected to bounded additive disturbances. The planner utilizes fixed-size tube-parametrized motion primitives, which are computed utilizing the nominal disturbance-free nonlinear system. The tube size is dictated by a suitably designed feedback controller, which keeps the error, occurring due to the mismatch between the uncertain nonlinear system and the nominal system, bounded for all time. Collision avoidance with obstacles is taken care of during runtime by solving a graph-search problem. The graph-search algorithm connects the initial state to a region around the desired final state by sequentially utilizing the tube-parameterized motion primitives, while avoiding overlap between the tubes and the obstacles. The proposed strategy is 
\begin{figure}[H]
\begin{center}
\centering
\vspace{-0.1in}
\includegraphics[width=3.4in,height=3.7in]{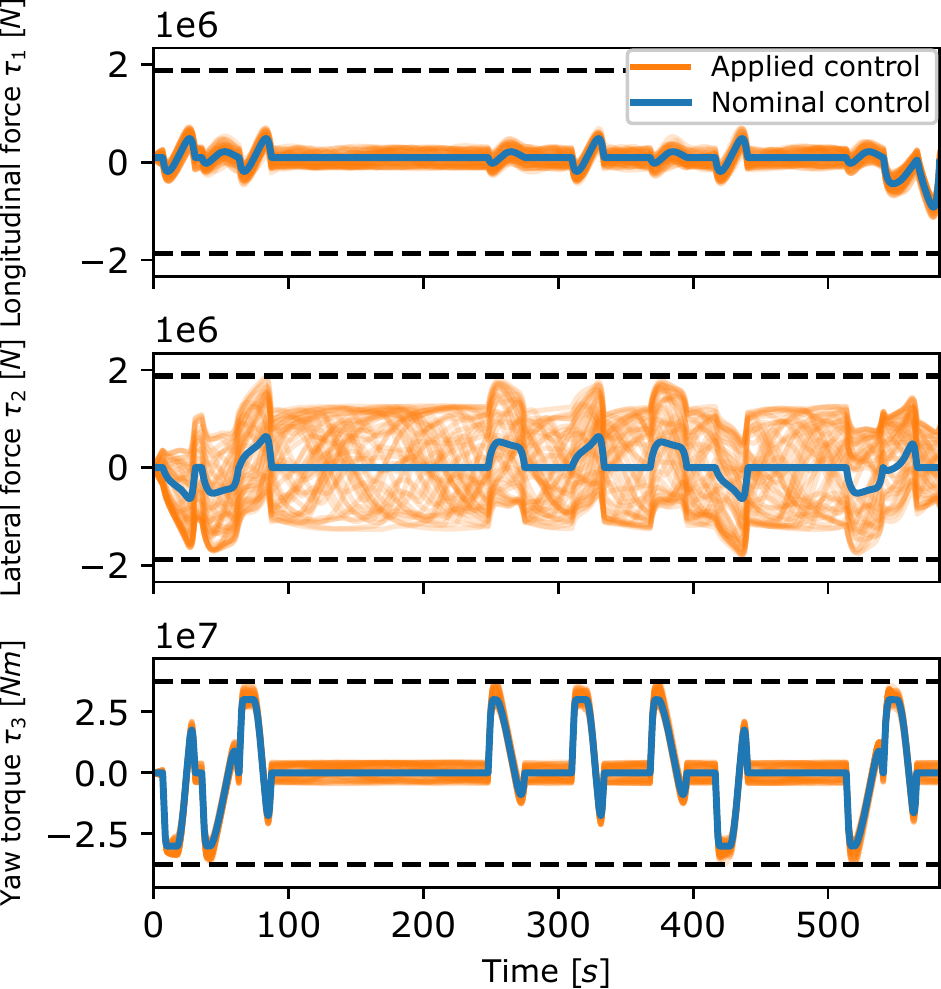}
\caption{Control inputs applied while executing the motion in 3(b)}
\label{fig:ci}
\vspace{-0.2in}
\end{center}
\end{figure}
\noindent implemented on an Euler-Lagrange (EL) system, where the feedback controller and the associated tubes are analytically derived. A ship model based on the EL dynamics is considered for simulation, where it is shown that the proposed strategy guarantees collision free motion through a fixed size tube from the initial position to the final position, while being affected by significant wind disturbance.
\bibliography{ref}
\bibliographystyle{ieeetr}
\end{document}